\documentclass[11pt]{scrartcl}
\usepackage{amsmath,amssymb,amsthm}
\usepackage{hyperref}
\usepackage{fullpage}
\usepackage{amsmath,amsthm}
\usepackage{newtxtext}
\usepackage{newtxmath}

\newtheorem{claim}{Claim}

\newtheorem{theorem}{Theorem}

\newcommand {\calC} {{\cal C}}

\newcommand {\set} [1] {\left\{ #1 \right\}}

\newcommand {\Exp} {\mathbb{E}}

\newcommand {\supp} {\operatorname{supp}}
\DeclareMathOperator {\argmax}{arg\,max}

\title{Sharp Analysis of Gaussian Rounding for Boolean Max $k$-CSP}
\author{Yury Makarychev (TTIC)}
\date{}
\begin{document}
\maketitle

\begin{abstract}
In this note, we show that the approximation algorithm for Boolean Max $k$-CSP presented in~\cite{MM} yields a $(1-o_k(1))k/2^k$ approximation, as conjectured in~\cite{MMSurvey}. This improves the previous guarantee of $(0.626612-o_k(1))k/2^k$ from~\cite{MM} and asymptotically matches the known hardness results; see~\cite{MMSurvey,am09,DeMossel,chan12}. The result is a short corollary of the Gaussian stochastic domination theorem of Mulgund~\cite{Mulgund}. 
\end{abstract}

\section{Introduction}

For Boolean Max $k$-CSP, Hast~\cite{hast05} gave an $\Omega(k/(2^k\log k))$ approximation, and Charikar, Makarychev, and Makarychev~\cite{cmm} obtained a $c k/2^k$ approximation for an absolute constant $c>0.44$. In~\cite{MM} we improved the constant to $0.626612-o_k(1)$ and conjectured that the same approach gives a $(1-o_k(1))k/2^k$ approximation; see also the survey~\cite{MMSurvey}.

The approximation guarantee is asymptotically optimal. Austrin and Mossel~\cite{am09} showed, assuming the Unique Games Conjecture, hardness of approximation within $(k+O(k^{0.525}))/2^k+\varepsilon$. De and Mossel~\cite{DeMossel} subsequently improved this to $(k+1)/2^k+\varepsilon$ for odd $k$ and $(k+2)/2^k+\varepsilon$ for even $k$. Thus, under the Unique Games Conjecture, the optimal approximation factor is $(1+o_k(1))k/2^k$. Chan~\cite{chan12} obtained unconditional hardness from pairwise-independent subgroups; in particular, for $k=2^r-1$, his result gives hardness within $(k+1)/2^k+\varepsilon$. Hence the constant in our approximation guarantee is optimal under $P\ne NP$ for an infinite sequence of values of $k$.

\noindent\textbf{Overview.}
We use the same SDP relaxation, notation, and Gaussian rounding as in~\cite{MM}. The only new ingredient is Mulgund's stochastic domination theorem~\cite{Mulgund}, which we use as a black box. If $C$ is a clause and $\|z_C\|^2=\beta^2/k$, the theorem gives the key estimate
\[
\Pr\left(C\text{ is satisfied}\right)\ge \int_{\mathbb R}F(cx)^k\,d\gamma(x),
\qquad
c^2=\frac{\beta^2-1}{k-\beta^2}.
\]
At $\|z_C\|^2=1/k$ the right-hand side is exactly $2^{-k}$, while slightly above this value it grows sufficiently fast. Combining this rounding with a random assignment gives the claimed $(1-o_k(1))k/2^k$ approximation. Thus the result is essentially a corollary of~\cite{Mulgund}.

\noindent{\textbf{Related work:}} Concurrently and independently, Bakshi~\cite{Bakshi} also settled the conjecture for Boolean Max $k$-CSP.

\section{The algorithm}

An instance of Boolean Max $k$-CSP consists of Boolean variables $x_u\in\{1,2\}$ and constraints, each depending on at most $k$ variables; the goal is to find an assignment that satisfies the maximum number of constraints. Following Trevisan~\cite{trev98}, as in~\cite{MM}, we may replace a predicate $P(x_{v_1},\ldots,x_{v_r})$ by the clauses
\[
\set{(x_{v_1}=i_1)\wedge\cdots\wedge(x_{v_r}=i_r): P(i_1,\ldots,i_r)=1}.
\]
For every assignment, exactly one of these clauses is satisfied if $P$ is satisfied, and none is satisfied otherwise. Thus this transformation preserves the value of every assignment. We thus assume below that every constraint is a conjunction of literals.
We use the same SDP relaxation and notation as in~\cite{MM}. For every variable $x_u$ and $i\in\{1,2\}$ the SDP has a vector $u_i$, and for every clause $C$ it has a vector $z_C$:
\begin{align*}
\text{maximize: }&\sum_{C\in\calC}\|z_C\|^2\\
\text{subject to }&\\
&\|u_1\|^2+\|u_2\|^2=1 \\
&\langle u_1,u_2\rangle=0,\\
&\langle u_i,z_C\rangle=\|z_C\|^2 &&\text{if }(u,i)\in C,\\
&\langle u_j,z_C\rangle=0 &&\text{if }(u,i)\in C,\ j\ne i.
\end{align*}
The intended meaning of these vectors is the following. In an integral solution, for a fixed unit vector $\mathbf e$, we set $u_i=\mathbf e$ if $x_u=i$ and $u_i=0$ otherwise; similarly, $z_C=\mathbf e$ if $C$ is satisfied and $z_C=0$ otherwise. Thus $\|u_i\|^2$ is a relaxation of the indicator that $x_u=i$, while $\|z_C\|^2$ is a relaxation of the indicator that $C$ is satisfied. The constraints above are satisfied by every integral assignment, therefore, $SDP\ge OPT$. We may also assume that every clause has length exactly $k$, by padding shorter clauses with extra variables.

\paragraph{Rounding procedure} The SDP rounding is the same Gaussian rounding as in~\cite{MM}: choose a standard Gaussian vector $g$ and set
\[
x_u=\argmax_{i\in\{1,2\}}\langle u_i,g\rangle.
\]
The final algorithm uses this rounding with probability $1/k$ and a uniformly random assignment with probability $1-1/k$. Write
\[
F(t)=\Pr\left(g\le t\right),
\]
where $g\sim {\cal N}(0,1)$, and denote standard Gaussian measure on $\mathbb R$ by $\gamma$.
We use the following theorem of Mulgund~\cite{Mulgund}.

\begin{theorem}[Mulgund]\label{thm:mulgund}
Let $R$ be a $k\times k$ correlation matrix and let $J$ be the all-ones matrix. If
\[
R-\frac1kJ\succeq0
\]
and $X\sim {\cal N}(0,R)$, then, for every $t\in\mathbb R$,
\[
\Pr\left(X_u\le t\text{ for every }u\right)\ge F(t)^k.
\]
\end{theorem}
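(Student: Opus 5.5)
The natural first comparison is with the independent case. The identity matrix $I$ satisfies the hypothesis, since $I-\frac1kJ$ has eigenvalues $0$ and $1$. For $R=I$ the bound $F(t)^k$ holds with equality. So the statement says that $I$ is the extremal correlation matrix in the convex set $\mathcal K=\set{R \text{ correlation}: R-\frac1kJ\succeq 0}$. If all off-diagonal entries of $R$ were nonnegative, Slepian's inequality would already give the claim. The difficulty is that $R\in\mathcal K$ may have negative correlations, which push the orthant probability down. The constraint $R\succeq \frac1kJ$ limits how negative they can be "on average": for every vector $a$ we have $a^TRa\ge (\sum_u a_u)^2/k$.

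I would argue by Gaussian interpolation. Let $W\sim{\cal N}(0,I)$ be independent of $X$, and set $X(s)=\sqrt{s}\,X+\sqrt{1-s}\,W$. Its covariance is $R(s)=sR+(1-s)I$, which lies in $\mathcal K$ for all $s\in[0,1]$ by convexity. Put $\Phi(s)=\Pr(X_u(s)\le t\ \forall u)$. The standard Gaussian integration-by-parts formula (the one behind Slepian's lemma) gives
\[
\Phi'(s)=\sum_{u<v} R_{uv}\, p_{uv}(s),
\]
where $p_{uv}(s)\ge 0$ is the density of $(X_u(s),X_v(s))$ at $(t,t)$ times the conditional probability that all other coordinates are $\le t$. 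It would suffice to show $\Phi'(s)\ge 0$, or at least $\Phi(1)\ge\Phi(0)=F(t)^k$.

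The main obstacle is that $\Phi'(s)\ge0$ does not follow pointwise, because some $R_{uv}$ are negative. I would try to exploit the quadratic-form constraint by writing the weights $p_{uv}$ in a Gram-like form. The hope is to express $\sum_{u<v}R_{uv}p_{uv}$ as
\[
\tfrac12\Exp\left[\xi^T (R-\tfrac1kJ)\,\xi\right]+\tfrac1{2k}\Exp\left[(\textstyle\sum_u\xi_u)^2\right]-(\text{diagonal terms}),
\]
where $\xi_u$ is a nonnegative random vector built from the conditional densities at the boundary $X_u=t$. If the diagonal terms can be controlled by the $\frac1k(\sum_u\xi_u)^2$ term, nonnegativity follows from $R-\frac1kJ\succeq0$.

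A second route I would keep in reserve uses the decomposition $R=\frac1kJ+S$ with $S\succeq0$. Write $X_u=Z/\sqrt k+Y_u$, where $Z\sim{\cal N}(0,1)$ is independent of $Y\sim{\cal N}(0,S)$ and $\operatorname{Var}Y_u=1-\frac1k$. Condition on $Z$, and compare $\Pr(Y_u\le t-Z/\sqrt k\ \forall u)$ with the corresponding quantity for $S=I-\frac1kJ$, where $X$ is exactly i.i.d. This reduces the problem to showing that, after averaging over the common factor $Z$, the matrix $S_0=I-\frac1kJ$ is worst among PSD $S$ with diagonal $1-\frac1k$. That is again a Slepian-type comparison, now with the most negatively correlated configuration $S_0$. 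I expect the genuinely delicate step in either route to be this control of negative correlations. I would attack it first in the symmetric case, averaging over permutations of coordinates, and then try to remove symmetry via the interpolation identity.
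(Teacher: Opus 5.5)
The paper does not prove this statement. It quotes it from Mulgund and uses it as a black box, so your proposal has to stand on its own, and it is not a proof. What you do establish is correct but routine: $I\in\mathcal K$ with equality, convexity of $\mathcal K$, the formula $\Phi'(s)=\sum_{u<v}R_{uv}p_{uv}(s)$, and Slepian when all $R_{uv}\ge0$. The step that carries the entire content of the theorem, controlling negative correlations through $R\succeq\frac1kJ$, is left as a hope.

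Route 1 also cannot be completed as designed. If $p_{uv}=\Exp[\xi_u\xi_v]$, your expression equals $\frac12\Exp[\xi^TR\xi]-\frac12\sum_u\Exp[\xi_u^2]$. Since $\frac1k(\sum_u\xi_u)^2\le\sum_u\xi_u^2$ by Cauchy--Schwarz, the diagonal terms can be absorbed only if all $\xi_u$ coincide, i.e.\ all $p_{uv}(s)$ are equal, as at $s=0$. More decisively, $\Phi'(s)\ge0$ is false. Take $k=3$ and
\[
R=\begin{pmatrix}1&-\frac13&\frac13\\ -\frac13&1&\frac13\\ \frac13&\frac13&1\end{pmatrix},\qquad
R-\tfrac13J=\tfrac23\begin{pmatrix}1&-1&0\\ -1&1&0\\ 0&0&1\end{pmatrix}\succeq0 .
\]
Given $X_1=X_2=t$ one has $X_3\sim\mathcal N(t,\frac23)$. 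Given $X_1=X_3=t$ one has $X_2\sim\mathcal N(0,\frac23)$. All pairs have $|R_{uv}|=\frac13$, so with $\kappa=\frac{3}{2\pi\sqrt8}$,
\[
\Phi'(1)=\frac{\kappa}{3}\Bigl(2e^{-3t^2/4}F\bigl(\sqrt{3/2}\,t\bigr)-\tfrac12e^{-3t^2/2}\Bigr).
\]
This is negative at $t=-2$: inside the bracket, about $7.1\cdot10^{-4}$ against $1.24\cdot10^{-3}$. It is also negative for all sufficiently negative $t$. So along your path the orthant probability strictly decreases near $s=1$, even though $\Phi(1)\ge\Phi(0)$. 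The hypothesis gives a sign for free only at $s=0$, where $\Phi'(0)=\phi(t)^2F(t)^{k-2}\sum_{u<v}R_{uv}\ge0$ (with $\phi=F'$) because $\mathbf 1^TR\mathbf 1\ge k$.

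Route 2 is a reformulation, not a reduction. The map $R\mapsto R-\frac1kJ$ is a bijection from $\mathcal K$ onto PSD matrices with diagonal $1-\frac1k$, sending $I$ to $S_0$. So ``$S_0$ is worst after averaging over $Z$'' is the theorem verbatim. It would become a genuine reduction only if you proved the comparison for each fixed $Z$, i.e.\ $\Pr(\max_uY_u\le a)\ge\Pr(\max_uY^0_u\le a)$ for every $a$. That is an extremal problem for the Gaussian measure of polyhedra with $k$ facets tangent to a common ball, and it is itself nontrivial.

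Starting from the symmetric case does not help either. An equicorrelated $(1-\rho)I+\rho J$ lies in $\mathcal K$ iff $\rho\in[0,1]$, so that case is plain Slepian. There is also no way back from the permutation average $\bar R$ to $R$, because $\Phi(R)\ge\Phi(\bar R)$ fails. For the matrix above, $\bar R$ has $\rho=\frac19$, and $\mathbf1^T\bar R^{-1}\mathbf1=\frac{27}{11}<3=\mathbf1^TR^{-1}\mathbf1$, with both $R^{-1}\mathbf1,\bar R^{-1}\mathbf1\ge0$. Comparing tail exponents gives $\Phi(\bar R)/\Phi(R)\to\infty$ as $t\to-\infty$.

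What is missing is therefore the whole substance of Mulgund's theorem. You need a global argument showing that $R\succeq\frac1kJ$ compensates negative correlations at every level $t$. That includes the negative levels the paper needs when it integrates over $Z$, which is exactly where the counterexample above lives.
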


\section{Analysis of one clause}

Fix a clause $C$. As in~\cite{MM}, we may assume without loss of generality that
$C=\set{(u,1):u\in\supp(C)}$, that is, the clause is of the form $\bigwedge_{u\in\supp(C)}(x_u=1)$.
Crucially, the clause $C$ is satisfied if and only if
\[
\langle v_u,g\rangle\ge0\qquad\text{for every }u\in\supp(C)
\]
where $g \sim {\cal N}(0, I)$ is the random Gaussian vector from the rounding scheme. Our goal now is to lower bound the probability of this event.

Define $s=\|z_C\|^2$
and, for $u\in\supp(C)$,
\[
v_u=u_1-u_2 \qquad\text{and}\qquad w_u=v_u-z_C.
\]
By the SDP constraints, we have $\|v_u\|^2=\|u_1\|^2 + \|u_2\|^2=1$, $\langle v_u,z_C\rangle=s$, $w_u\perp z_C$, and $\|w_u\|^2=1-s$. Let $\Sigma$ be the Gram matrix of the vectors $v_u$, $u\in\supp(C)$. 
Since all vectors $v_u$ share the same component $z_C$, namely $v_u = z_C + w_u$ for $u\in\supp(C)$, where $w_u \perp z_C$, we can write
$$\langle v_a, v_b\rangle = s + \langle w_a, w_b\rangle,$$
or, for $s < 1$, in the matrix form 
\begin{equation}\label{eq:sigma}
\Sigma=sJ+(1-s)G,
\end{equation}
where $G$ is defined as
\[
G_{uv}=\frac{\langle w_u,w_v\rangle}{1-s}.
\]

Note that the correlation matrix of random variables $\xi_u = \langle v_u,g\rangle$ is exactly $\Sigma$. We would like to apply Theorem~\ref{thm:mulgund} now. However, applying it directly to $\xi_u$ will not yield the desired bound. Let $\beta = \sqrt{sk}$ and assume that $\beta \ge1$. We add a part of the common component $sJ$ to $G$, so that after normalization the resulting matrix dominates $J/k$, and keep the rest separate:
\[
\Sigma=(1-s)\left(G+\frac1{k-1}J\right)+\frac{ks-1}{k-1}J.
\]
Define
\[
R=\frac{k-1}{k}\left(G+\frac1{k-1}J\right)
=\frac{k-1}{k}G+\frac1kJ.
\]
Then $R$ is a correlation matrix and
\[
R-\frac1kJ=\frac{k-1}{k}G\succeq0.
\]
Let $\theta=\frac{k-\beta^2}{k-1}$, using $s=\beta^2/k$, we get
\[
\Sigma=\theta R+(1-\theta)J.
\]
Consequently, if $X\sim {\cal N}(0,R)$ and $Z\sim {\cal N}(0,1)$ are independent, then
\[
\bigl(\xi_u\bigr)_{u\in\supp(C)}
\stackrel{d}{=}
\bigl(\sqrt\theta X_u+\sqrt{1-\theta}Z\bigr)_{u\in\supp(C)}.
\]
Applying Theorem~\ref{thm:mulgund} after conditioning on $Z$ gives the key estimate
\begin{align}
&\Pr\left(C\text{ is satisfied}\right) = \Pr\left(-X_u \leq \sqrt{\frac{1-\theta}{\theta} Z}\text{ for all }u\in\supp(C)\right)
\ge \int_{\mathbb R}F(cx)^k\,d\gamma(x),\label{eq:key-integral}\\
&\text{where } c^2=\frac{1-\theta}{\theta}
=\frac{\beta^2-1}{k-\beta^2}.
\label{eq:c}
\end{align}

\begin{claim}\label{claim:boolean-success}
Let
\[
\beta_k^2=1+\frac{4\pi\log k}{k}.
\]
For all sufficiently large $k$, if
\[
\|z_C\|^2\ge\frac{\beta_k^2}{k},
\]
then the SDP rounding satisfies $C$ with probability at least $k^2/2^k$.
\end{claim}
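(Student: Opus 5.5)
We will bound the right-hand side of \eqref{eq:key-integral} from below, discarding the contribution from $x<0$. By \eqref{eq:c}, $c^2=(\beta^2-1)/(k-\beta^2)$ is increasing in $\beta^2=k\|z_C\|^2$. Moreover, for $x\ge 0$ the integrand $F(cx)^k$ is nondecreasing in $c$. Since $F(cx)^k\ge 0$ everywhere,
\[
\Pr(C\text{ is satisfied})\ge\int_{0}^{\infty}F(cx)^k\,d\gamma(x)\ge \int_0^\infty F(c_kx)^k\,d\gamma(x),
\qquad c_k^2=\frac{\beta_k^2-1}{k-\beta_k^2}=\frac{4\pi\log k}{k(k-\beta_k^2)}.
\]
This reduces the claim to the single worst case $\|z_C\|^2=\beta_k^2/k$, where $c_k=(1+o(1))\,2\sqrt{\pi\log k}/k$. (If $\|z_C\|^2=1$, so that $s=1$, all $\xi_u$ coincide and the probability is $1/2$; this case is trivial.)

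Next we need a pointwise bound on $F$ near $0$. For $y\ge 0$ we have $F(y)\ge \frac12+y\varphi(y)$, where $\varphi$ is the standard Gaussian density, because $\varphi$ is decreasing on $[0,y]$. Hence
\[
F(c_kx)^k\ge 2^{-k}\Bigl(1+\tfrac{2}{\sqrt{2\pi}}\,c_kx\,e^{-c_k^2x^2/2}\Bigr)^k .
\]
We will only use $x=O(\sqrt{\log k})$, where $c_kx=O(\log k/k)$. In this range $e^{-c_k^2x^2/2}=1-o(1/k)$ and $\ln(1+y)\ge y-y^2$. So the $k$-th power is at least
\[
\exp\Bigl(\tfrac{2}{\sqrt{2\pi}}\,k c_k x-O\bigl((\log k)^2/k\bigr)\Bigr)=\exp\bigl((1-o(1))\,2\sqrt2\,x\sqrt{\log k}\bigr).
\]

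The main step, and the place where the constant $4\pi$ matters, is choosing the window of $x$. Taking $x=t\sqrt{\log k}$, the gain factor is about $k^{2\sqrt2\,t}$ and the Gaussian weight is about $k^{-t^2/2}$. Their product $k^{2\sqrt2 t-t^2/2}$ is maximized at $t=2\sqrt2$, where it equals $k^{4}$. So we integrate only over $x\in[x_0,x_0+1]$ with $x_0=2\sqrt{2\log k}$. On this interval $\gamma$ has density at least $\varphi(x_0+1)=k^{-4}e^{-O(\sqrt{\log k})}$, and the gain factor is at least $k^{8-o(1)}$. The product is therefore $k^{4-o(1)}$, which exceeds $k^2$ for all sufficiently large $k$. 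The expected obstacle is purely bookkeeping: we must check that the $o(1)$ losses do not eat up the slack $k^{2-o(1)}$. These losses come from $k-\beta_k^2$ versus $k$, from the second-order term in $\ln(1+y)$, from the factor $e^{-c_k^2x^2/2}$, and from the $e^{-O(\sqrt{\log k})}$ in the Gaussian density. Since each loss is at most $k^{o(1)}$, there is ample room, and we conclude that $\Pr(C\text{ is satisfied})\ge k^2/2^k$.
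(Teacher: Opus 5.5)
Your proof is correct and follows essentially the same route as the paper: you reduce to the worst case $\beta=\beta_k$ by monotonicity in $c$, then bound the integral from below by its contribution from the window $[x_0,x_0+1]$ with $x_0\approx 2\sqrt{2\log k}$ (which matches the paper's $x_k=akc_k$ to first order), obtaining $2^{-k}k^{4-o(1)}$. The differences are only cosmetic. For monotonicity you discard the $x<0$ half of the integral instead of pairing $x$ with $-x$, and you use the explicit inequality $F(y)\ge \frac12+y\varphi(y)$ where the paper uses the expansion $\log(2F(t))=at+O(t^2)$.
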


\begin{proof}
It suffices to estimate~\eqref{eq:key-integral} at $\beta=\beta_k$. Indeed, $c$ is increasing in $\beta$, and the integral is increasing in $c\ge0$: pairing $x$ and $-x$, the integrand becomes $p^k+(1-p)^k$, where $p=F(cx)\ge1/2$, and this is increasing in $p$. Let $c_k$ be the value of $c$ corresponding to $\beta=\beta_k$. Then
\[
c_k^2=(4\pi+o(1))\frac{\log k}{k^2}.
\]
Put $a=\sqrt{2/\pi}$ and $x_k=ak c_k$. Since
$
\log(2F(t))=at+O(t^2)
$ (when $t\to 0$)
and $c_kx_k=ak c_k^2=o(1)$, we have
\begin{align*}
\log\left(2^k\int_{\mathbb R}F(c_kx)^k\,d\gamma(x)\right)
&\ge k\log(2F(c_kx_k))+\log\gamma([x_k,x_k+1])
\ge a^2k^2c_k^2-\frac{x_k^2}{2}+o(k^2c_k^2)\\
&=\left(\frac1\pi+o(1)\right)k^2c_k^2 
=(4+o(1))\log k.
\end{align*}
Hence
\[
\int_{\mathbb R}F(c_kx)^k\,d\gamma(x)
\ge \frac{k^{4-o(1)}}{2^k}\ge\frac{k^2}{2^k}.
\]
The case $\|z_C\|=1$ is immediate.
\end{proof}

\section{Approximation guarantee}

\begin{theorem}\label{thm:main}
There is a randomized polynomial-time approximation algorithm for Boolean Max $k$-CSP with approximation guarantee
\[
\left(1-O\left(\frac{\log k}{k}\right)\right)\frac{k}{2^k}.
\]
In particular, the approximation guarantee is $(1-o_k(1))k/2^k$.
\end{theorem}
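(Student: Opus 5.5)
We analyze the combined algorithm (Gaussian rounding with probability $1/k$, uniformly random assignment with probability $1-1/k$) clause by clause and compare with $\|z_C\|^2$, which by $SDP\ge OPT$ suffices. Fix a clause $C$ and let $s=\|z_C\|^2\in[0,1]$. The random assignment satisfies $C$ with probability exactly $2^{-k}$, since $C$ is a conjunction of $k$ literals on distinct variables. Hence
\[
\Pr(C\text{ satisfied})\ge \frac{1}{k}\Pr_{\mathrm{SDP}}(C)+\Bigl(1-\frac1k\Bigr)2^{-k}.
\]
Set the threshold $\tau=\beta_k^2/k=\frac1k+\frac{4\pi\log k}{k^2}$ from Claim~\ref{claim:boolean-success}, and split into two cases.

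\emph{Case $s\le\tau$.} Here we ignore the SDP rounding and use only the random-assignment term. We have $s\le \frac1k\bigl(1+\frac{4\pi\log k}{k}\bigr)$, so
\[
\Bigl(1-\frac1k\Bigr)2^{-k}\ge \Bigl(1-\frac1k\Bigr)\Bigl(1+\frac{4\pi\log k}{k}\Bigr)^{-1}\frac{k}{2^k}\,s=\Bigl(1-O\Bigl(\frac{\log k}{k}\Bigr)\Bigr)\frac{k}{2^k}\,s .
\]

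\emph{Case $s>\tau$.} By Claim~\ref{claim:boolean-success}, for $k$ large the SDP rounding satisfies $C$ with probability at least $k^2/2^k$. The combined algorithm therefore satisfies $C$ with probability at least $\frac1k\cdot\frac{k^2}{2^k}=\frac{k}{2^k}\ge \frac{k}{2^k}s$, using $s\le1$.

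Summing over clauses, linearity of expectation gives expected value at least $(1-O(\log k/k))\frac{k}{2^k}\sum_C\|z_C\|^2\ge(1-O(\log k/k))\frac{k}{2^k}OPT$. Small $k$ are absorbed into the $O(\cdot)$ constant, because for bounded $k$ the claimed factor can be made nonpositive or trivial by choosing the constant large. Trevisan's reduction preserves assignment values, and padding to length exactly $k$ preserves both the SDP and the satisfaction probabilities in the way used above. Solving the SDP to within additive $\varepsilon$ in polynomial time costs only a negligible factor, which is absorbed as usual.

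There is no real obstacle, since the heavy lifting is in Claim~\ref{claim:boolean-success}. The only point needing care is making sure the threshold slack $4\pi\log k/k$ in the first case and the $1/k$ mixing loss combine to $O(\log k/k)$. This holds because the Claim delivers a probability of $k^2/2^k$, a full factor of $k$ beyond what is needed, so the $1/k$ mixing weight costs nothing in the second case.
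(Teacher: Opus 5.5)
Your proof is correct and follows the paper's argument: the same threshold $\beta_k^2/k$, the random assignment covering $s$ below it with factor $(1-1/k)/\beta_k^2 = 1-O(\log k/k)$, and the $k^2/2^k$ bound from Claim~\ref{claim:boolean-success} paying for the $1/k$ mixing weight above it, followed by summing against $SDP\ge OPT$. Your added remarks on small $k$, padding, and solving the SDP approximately are harmless bookkeeping that the paper leaves implicit.
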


\begin{proof}
Let
\[
\alpha_k=\frac{1-1/k}{\beta_k^2}=1-O\left(\frac{\log k}{k}\right).
\]
Consider a clause $C$ and let $s=\|z_C\|^2$.
If $s<\beta_k^2/k$, the random assignment satisfies $C$ with probability at least
\[
\frac{1-1/k}{2^k}
\ge \alpha_k\frac{ks}{2^k}.
\]
If $s\ge\beta_k^2/k$, Claim~\ref{claim:boolean-success} and the rounded assignment satisfies $C$ with probability at least
\[
\frac1k\cdot\frac{k^2}{2^k}
=\frac{k}{2^k}
\ge\frac{ks}{2^k}
\ge\alpha_k\frac{ks}{2^k}.
\]
Thus every clause $C$ is satisfied with probability at least
\[
\alpha_k\frac{k}{2^k}\|z_C\|^2.
\]
Summing over all clauses,
\[
\Exp[\text{number of satisfied clauses}]
\ge \alpha_k\frac{k}{2^k}\sum_{C\in\calC}\|z_C\|^2
=\alpha_k\frac{k}{2^k}SDP
\ge\alpha_k\frac{k}{2^k}OPT.
\]
\end{proof}

\bibliographystyle{plain}
\bibliography{mybib2026}

\end{document}